\newtheorem{prop}{Proposition}
\newtheorem{corollary}{Corollary}
\newtheorem{cond}{Condition}
\def\E{{\rm E}\,}
  \author[1,2]{Peter M. Aronow}
  \affil[1]{Department of Political Science, Yale University}
    \affil[2]{Department of Biostatistics, Yale School of Public Health}
  \author[2]{Forrest W. Crawford}
  \title{Nonparametric Identification for Respondent-Driven Sampling}
\date{\today}
\begin{document}


\maketitle


\begin{abstract}
\noindent Respondent-driven sampling is a survey method for hidden or hard-to-reach populations in which sampled individuals recruit others in the study population via their social links. The most popular estimator for for the population mean assumes that individual sampling probabilities are proportional to each subject's reported degree in a social network connecting members of the hidden population.  However, it remains unclear under what circumstances these estimators are valid, and what assumptions are formally required to identify population quantities.  In this short note we detail nonparametric identification results for the population mean when the sampling probability is assumed to be a function of network degree known to scale. Importantly, we establish general conditions for the consistency of the popular Volz-Heckathorn (VH) estimator.  Our results imply that the conditions for consistency of the VH estimator are far less stringent than those suggested by recent work on diagnostics for RDS. In particular, our results do not require random sampling or the existence of a network connecting the population. \\[1em]
\textbf{Keywords:} 
Horvitz-Thompson estimator,
network degree,
respondent-driven sampling
\end{abstract}


\section{Introduction}

Respondent-driven sampling (RDS) is a method for surveying hidden or hard-to-reach populations such as sex workers or injection drug users \citep{Heckathorn1997Respondent,Broadhead1998Harnessing}.  Starting with a group of initial subjects called ``seeds'', respondents recruit others who are also members of the study population by giving them ``coupons" to present to the researcher.  These new subjects are interviewed, given coupons, and the process repeats.  Many researchers have approximated RDS as a sampling design in which the sampling probability for subject $i$ is proportional to their network degree $d_i$ \citep{Salganik2004Sampling,Volz2008Probability,Gile2010Respondent,Gile2011Improved}.  In particular, \citet{Salganik2004Sampling} and \citet{Volz2008Probability} justify this choice by modeling the recruitment process as a with-replacement random walk on a connected population network, where only one coupon is given to each subject, recruitment is uniformly at random from network neighbors, and each subject can be recruited infinitely many times.  
For an RDS sample of size $n$, \citet{Volz2008Probability} (hereafter VH) give the estimator 
\begin{equation}
  \hat{\mu}_{VH} = \frac{\sum_{i=1}^n y_i d_i^{-1}}{\sum_{i=1}^n d_i^{-1}}
  \label{eq:vh}
\end{equation}
where $y_i$ is the outcome of interest and $d_i$ is the degree of subject $i$.  

Several authors have expressed skepticism about RDS survey methodology in general and the VH estimator in particular \citep{Heimer2005Critical,Johnston2008Implementation,Goel2010Assessing,Gile2010Respondent,Salganik2012Commentary,White2012Respondent}. Many alternative characterizations of the recruitment process exist \citep{Goel2009Respondent,Gile2010Respondent,Gile2011Improved,Berchenko2013Modeling,Crawford2014Graphical}.  Empirical studies have also cast doubt on the performance of the VH estimator in real-world RDS datasets \citep{Wejnert2009Empirical,McCreesh2012Evaluation,Rudolph2013Importance}.  

A recent paper by \citet{Gile2015Diagnostics} presents diagnostics whose purpose is to help researchers determine whether the assumptions often invoked to motivate the VH estimator \eqref{eq:vh} are met in empirical RDS data. The diagnostics presented by  \citet{Gile2015Diagnostics} address a particular class of motivating assumptions about the structure of a hypothesized social network and the process by which new subjects are sampled. These assumptions, characterized by \citet[pg. 3]{Gile2015Diagnostics} as ``required by the [VH] estimator,''  are summarized in Table 1, reproduced from the original paper. 

\begin{table}
  \begin{tabular}{lcc}
    \hline
    & Network structure assumptions & Sampling assumptions \\
    \hline
  \multirow{2}{*}{Random-walk model} & \multirow{2}{*}{Network size large ($N\gg n$)} & With-replacement sampling, \\
                    &                               & single non-branching chain    \\
  \multirow{2}{*}{Remove seed dependence} & Homophily sufficiently weak, & \multirow{2}{*}{Enough sample waves} \\
                         & bottlenecks limited,       & \\
                         & connected graph & \\
    \multirow{2}{*}{Respondent behaviour} & \multirow{2}{*}{All ties reciprocated} & Degree accurately measured, \\
       & & random referral \\
    \hline
  \end{tabular}
  \label{tab:assumptions}
  \caption{Assumptions listed by \citet{Gile2015Diagnostics} as requirements for the VH estimator.  Reproduced from their Table 1. }
\end{table}

In this short note, we give an alternative, nonparametric set of conditions under which the VH estimator is consistent, and note identification conditions for a generalization of the VH estimator.   The conditions we articulate for consistency 
are restrictive and untestable, 
but they are nevertheless less stringent than the traditional model used to justify the VH estimator.  Consistency of the VH estimator does not require random sampling or even the existence of a network connecting the members of the study population. Our results clarify the inferential challenges posed by RDS data, challenges beyond those of other non-probability samples. Importantly, however, our results suggest conditions that can be more generally implied by other generative models that may justify the VH estimator or variants thereof.




\section{Results}

Formally, consider a sequence of populations and samples converging weakly to a joint limit distribution on the outcome, (reported) degree, and sample, denoted $(Y,D,S)$. Let the $\E[\cdot]$ and $\Pr[\cdot]$ operators refer to features of this limiting distribution.  In RDS, we observe the empirical joint distribution of the outcome $Y$ and degree $D$ conditional on the sampling indicator $S = 1$.  Without loss of generality, suppose that $Y$ has bounded support and that $D$ has support in the set $\{1,...,K\}$. 

\begin{cond}[Ignorability]
\label{cond:ignorability}
For all $k$ such that $\Pr[D = k] > 0$, $\E[Y | S = 1, D = k] = \E[Y | D = k]$ and $\Pr[S = 1 \vert D = k] > 0$.
\end{cond}

\begin{cond}[Knowledge of the Conditional Probability of Sampling]
\label{cond:samplingprob}
$\Pr[S = 1 \vert D = k] = f(k)$, where $f(\cdot)$ is known up to a unknown scale parameter $c$.
\end{cond}


\begin{prop} 
  \label{prop:identification}
  Given Conditions \ref{cond:ignorability} and \ref{cond:samplingprob}, the population mean is identified, with 
\begin{equation}
  \E[Y] =   \frac{\sum_{k=1}^K \E[Y | S = 1, D = k] \frac{\Pr[D = k | S = 1]}{f(k)}}{\sum_{k=1}^K \frac{\Pr[D = k | S = 1]}{f(k)}}.
\end{equation}
\end{prop}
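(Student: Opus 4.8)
The plan is to reduce $\E[Y]$ to the observable conditional distribution of $(Y,D)$ given $S=1$ through a short chain of elementary identities, and then eliminate the two quantities that are not recoverable from the sampled data: the marginal sampling rate $\Pr[S=1]$ and the unknown scale $c$ in Condition~\ref{cond:samplingprob}. First I would write, by the law of total expectation over the finite support of $D$ (restricting to those $k$ with $\Pr[D=k]>0$),
\begin{equation*}
  \E[Y] = \sum_{k=1}^K \E[Y \mid D=k]\,\Pr[D=k],
\end{equation*}
and immediately invoke Ignorability (Condition~\ref{cond:ignorability}) to replace each unobservable $\E[Y\mid D=k]$ by the sample-observable $\E[Y\mid S=1, D=k]$.

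The next step is to express the unobserved marginal $\Pr[D=k]$ via Bayes' rule, $\Pr[D=k\mid S=1] = \Pr[S=1\mid D=k]\,\Pr[D=k]/\Pr[S=1]$, which — using $\Pr[S=1\mid D=k]=f(k)$ from Condition~\ref{cond:samplingprob} together with the positivity clause $\Pr[S=1\mid D=k]>0$ of Condition~\ref{cond:ignorability} — rearranges to $\Pr[D=k] = \Pr[S=1]\,\Pr[D=k\mid S=1]/f(k)$. Substituting into the total-expectation identity yields
\begin{equation*}
  \E[Y] = \Pr[S=1] \sum_{k=1}^K \E[Y\mid S=1, D=k]\,\frac{\Pr[D=k\mid S=1]}{f(k)}.
\end{equation*}

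The crux is eliminating the nuisance factor $\Pr[S=1]$, which I would accomplish using the normalization $\sum_{k=1}^K \Pr[D=k]=1$. Applying the same Bayes rearrangement term by term and summing gives $1 = \Pr[S=1]\sum_{k=1}^K \Pr[D=k\mid S=1]/f(k)$, so that $\Pr[S=1]$ equals the reciprocal of precisely the denominator appearing in the claimed formula. Dividing the previous display by this expression produces the stated ratio, with $\Pr[S=1]$ cancelling. The identical cancellation disposes of the unknown scale: writing $f(k)=c\,h(k)$ with $h$ a known shape, the factor $1/c$ enters numerator and denominator alike, so only $h(\cdot)$ survives — this is exactly why Condition~\ref{cond:samplingprob} need only determine $f$ up to scale.

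I expect the only genuine care, rather than a true obstacle, to lie in the bookkeeping around support and positivity: the sums must be restricted to $k$ with $\Pr[D=k]>0$ so that each conditional expectation is well defined, and one must lean on $f(k)>0$ so that every division is legitimate. Any value with $\Pr[D=k]=0$ forces $\Pr[D=k\mid S=1]=0$ and hence contributes nothing, so this restriction is harmless and the argument closes.
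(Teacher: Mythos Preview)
Your proposal is correct and follows essentially the same route as the paper: law of total expectation over $D$, Ignorability to replace $\E[Y\mid D=k]$ by $\E[Y\mid S=1,D=k]$, Bayes' rule to rewrite $\Pr[D=k]$ in terms of $\Pr[D=k\mid S=1]/f(k)$, and then cancellation of the nuisance factor $\Pr[S=1]$ (and the scale $c$) via the normalization $\sum_k \Pr[D=k]=1$. If anything, your treatment of the support/positivity bookkeeping is more careful than the paper's.
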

\begin{proof}
  We can identify $\E[Y | D = k]$ for each degree $k$ from Condition \ref{cond:ignorability} since $\E[Y|D=k] = \E[Y|S=1,D =k]$.  We can identify each $\Pr[D = k]$ to scale directly from Condition \ref{cond:samplingprob} as 
\begin{equation}
  \Pr[D = k] = \frac{\Pr[D = k | S = 1]\Pr[S = 1]}{\Pr[S = 1 | D = k]} = \frac{\Pr[S=1]}{c} \frac{\Pr[D = k | S = 1]}{f(k)}.
\end{equation}
Then by the law of total expectation,
\begin{equation}
\E[Y] = \frac{\sum_{k=1}^K \E[Y | D = k]  \frac{\Pr[S=1]}{c} \frac{\Pr[D = k | S = 1]}{f(k)}}{\sum_{k=1}^K \frac{\Pr[S=1]}{c} \frac{\Pr[D = k | S = 1]}{f(k)}} =  \frac{\sum_{k=1}^K \E[Y | D = k] \frac{\Pr[D = k | S = 1]}{f(k)}}{\sum_{k=1}^K \frac{\Pr[D = k | S = 1]}{f(k)}}.
\end{equation}
\end{proof}
\noindent Given Proposition \ref{prop:identification}, consistency of the VH estimator directly follows from convergence of sample analogues to population quantities.
\begin{corollary} 
  Given Conditions \ref{cond:ignorability} and \ref{cond:samplingprob}, the VH estimator is consistent for $\E[Y]$ if $f(k) \propto k$.
  \end{corollary}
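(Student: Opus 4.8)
The plan is to substitute the proportionality $f(k) \propto k$ into the identification formula of Proposition~\ref{prop:identification} and then recognize the VH estimator as a ratio of sample averages whose probability limits reproduce that formula exactly. First I would write $f(k) = ck$ and observe that the unknown scale $c$ cancels between the numerator and denominator of the identified expression, leaving
\begin{equation}
  \E[Y] = \frac{\sum_{k=1}^K \E[Y | S = 1, D = k]\, k^{-1}\, \Pr[D = k | S = 1]}{\sum_{k=1}^K k^{-1}\, \Pr[D = k | S = 1]}.
\end{equation}
The right-hand side depends only on the quantities $\E[Y | S=1, D=k]$ and $\Pr[D=k | S=1]$, all of which are features of the observed conditional distribution of $(Y,D)$ given $S=1$.

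Next I would rewrite the VH estimator \eqref{eq:vh} by dividing numerator and denominator by $n$,
\begin{equation}
  \hat{\mu}_{VH} = \frac{n^{-1}\sum_{i=1}^n y_i d_i^{-1}}{n^{-1}\sum_{i=1}^n d_i^{-1}},
\end{equation}
and identify each term as a sample average of a bounded function of $(Y,D)$ over the sampled units, whose empirical distribution is that of $(Y,D) \mid S=1$. Using that $Y$ has bounded support and $D$ takes values in $\{1,\dots,K\}$, so that $YD^{-1}$ and $D^{-1}$ are bounded, the assumed weak convergence of the empirical joint law of $(Y,D) \mid S=1$ gives that the numerator converges in probability to $\E[YD^{-1} | S=1] = \sum_{k=1}^K \E[Y | S=1, D=k]\, k^{-1}\, \Pr[D=k | S=1]$ and the denominator to $\E[D^{-1} | S=1] = \sum_{k=1}^K k^{-1}\, \Pr[D=k | S=1]$.

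Finally, I would invoke Slutsky's theorem (continuous mapping applied to the ratio) to conclude that $\hat{\mu}_{VH}$ converges in probability to the ratio of these two limits, which is precisely the identified expression for $\E[Y]$ displayed above. The one point requiring care --- and the main obstacle --- is verifying that the denominator limit is strictly positive, so that the ratio map is continuous at the limit point. This follows because Condition~\ref{cond:ignorability} guarantees $\Pr[S=1 | D=k] > 0$ whenever $\Pr[D=k] > 0$, which forces $\Pr[D=k | S=1] > 0$ for at least one $k$ and hence $\E[D^{-1} | S=1] > 0$. With positivity established, consistency is immediate.
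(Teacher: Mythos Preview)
Your proposal is correct and follows exactly the approach the paper takes: the paper's entire argument for the Corollary is the single sentence ``consistency of the VH estimator directly follows from convergence of sample analogues to population quantities,'' and you have simply spelled out those sample-analogue and Slutsky details explicitly. One minor simplification: positivity of the denominator is immediate from $D^{-1}\ge 1/K$ almost surely, so $\E[D^{-1}\mid S=1]\ge 1/K>0$ without needing to invoke Condition~\ref{cond:ignorability}.
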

  
  \section{Discussion}

  A variant of Condition \ref{cond:ignorability} is usually assumed implicitly in statistical arguments in favor of the VH estimator \citep{Salganik2004Sampling,Salganik2006Variance,Volz2008Probability}. Ignorability is not empirically testable from RDS data alone, since researchers never observe $\E[Y|S=0,D=k]$ for any $k$. While ignorability is a strong but common assumption imposed for inference from non-probability samples, Condition \ref{cond:samplingprob} highlights the additional challenges posed by RDS data. The researcher does not generally have knowledge of the population distribution of degree, and thus ignorability with respect to degree is not sufficient to identify the population mean.  Specification of the conditional sampling probability in Condition \ref{cond:samplingprob} provides an alternative means for identification, and has typically been the focus of researchers' efforts to justify the VH estimator. The random-walk argument serves to motivate the choice of $f(k) \propto k$ in the VH estimator, but is not strictly necessary for its consistency. Under any model that implies subjects with higher reported degrees are more likely to be sampled and $f(k) \propto k$ characterizes this relationship, Condition \ref{cond:samplingprob} holds. Finally, we note that our results suggest that the VH estimator and variants thereof may be appropriate even when diagnostics predicated on a more restrictive model \citep[e.g.,][]{Gile2015Diagnostics} fail.

Without knowledge of the characteristics of the unsampled subjects, neither Condition 1 nor Condition 2 has directly testable implications, and thus the value of any diagnostics must depend on further assumptions about the generative process.  
Under further parametric assumptions, some of the conditions listed in Table 1 might be sufficient to imply consistency of the VH estimator.  A formalization of these assumptions as part of a generative model for the recruitment process would allow researchers to evaluate the statistical properties of diagnostics like those proposed by \citet{Gile2015Diagnostics}.

\bibliographystyle{spbasic}

\bibliography{results}

\end{document}